\definecolor{mygreen}{rgb}{0,0.6,0}
\definecolor{mygray}{rgb}{0.5,0.5,0.5}
\definecolor{mymauve}{rgb}{0.58,0,0.82}
\newtheorem{theorem}{Theorem}
\definecolor{AAA}{rgb}{1.0, 0.13, 0.32}
\definecolor{BBB}{rgb}{0.2, 0.1, 1}
\definecolor{CCC}{rgb}{0.0, 1, 0}
\definecolor{DDD}{rgb}{0.0, 0, 1}
\definecolor{headerblue}{rgb}{0.8,0.87,0.94}
\definecolor{rowgray}{rgb}{0.95,0.95,0.95}
\begin{document}

\title{Power Diagram Enhanced Adaptive Isosurface Extraction from Signed Distance Fields}


\author{Pengfei Wang, Ziyang Zhang, Wensong Wang, Shuangmin Chen, Lin Lu, Shiqing Xin$^*$, Changhe Tu, Wenping Wang
\IEEEcompsocitemizethanks{
\IEEEcompsocthanksitem P. Wang, Z. Zhang, W. Wang, L. Lu, S. Xin, C. Tu are with the School of Computer Science, Shandong University. S. Xin is the corresponding author (email: xinshiqing@sdu.edu.cn).
\IEEEcompsocthanksitem S. Chen is with the School of Information and Technology, Qingdao University of Science and Technology.
\IEEEcompsocthanksitem W. Wang is with the Computer Science and Engineering, Texas A\&M University.
}
}




\twocolumn[{%
\renewcommand\twocolumn[1][]{#1}%
\maketitle
\includegraphics[width=.98\linewidth]{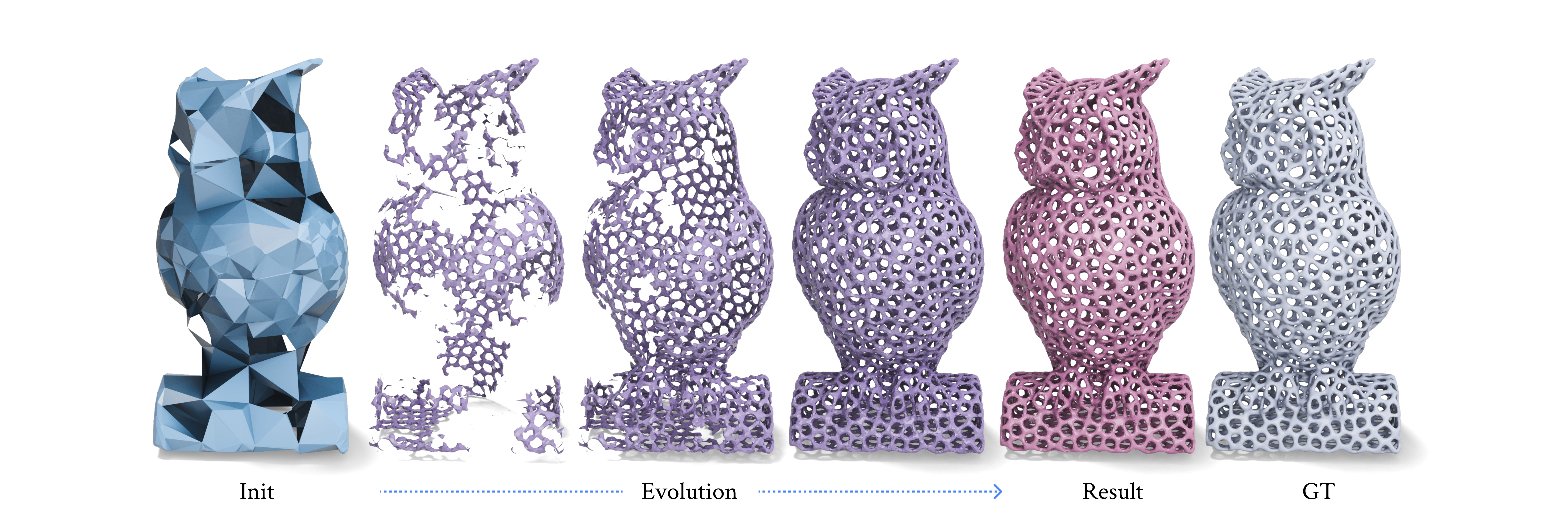}
\captionof{figure}{The workflow of our isosurface extraction. Starting from an initial coarse mesh surface, our method iteratively refines the geometry, progressively converging toward topologically faithful and geometrically accurate results. }
\label{fig:teaser}}]

\begin{abstract}
Extracting high-fidelity mesh surfaces from Signed Distance Fields (SDFs) has become a fundamental operation in geometry processing. Despite significant progress over the past decades, key challenges remain—namely, how to automatically capture the intricate geometric and topological structures encoded in the zero level set of SDFs. In this paper, we present a novel isosurface extraction algorithm that introduces two key innovations:
1) An incrementally constructed power diagram through the addition of sample points, which enables repeated updates to the extracted surface via its dual—regular Delaunay tetrahedralization; and  
2) An adaptive point insertion strategy that identifies regions exhibiting the greatest discrepancy between the current mesh and the underlying continuous surface.
As the teaser figure shows, our framework progressively refines the extracted mesh with minimal computational cost until it sufficiently approximates the underlying surface. Experimental results demonstrate that our approach outperforms state-of-the-art methods, particularly for models with intricate geometric variations and complex topologies.
\end{abstract}

\begin{IEEEkeywords}
Signed Distance Fields, Isosurface Extraction, Delaunay Tetrahedralization, Surface Reconstruction.
\end{IEEEkeywords}

\section{Introduction}
\label{sec:intro}

Signed Distance Fields (SDFs) serve as a fundamental implicit surface representation in geometry processing~\cite{729581,10.1145/3596711.3596732,1634323,LevelSetMethodsandFastMarchingMethods,TheLevelSetMethodsandDynamicImplicitSurfaces,559540,10.1145/3610548.3618170,Park_2019_CVPR}. With the development of geometric deep learning techniques, neural SDFs~\cite{takikawa2021nglod,wang2021neus} have become popular in many tasks, including surface reconstruction, shape completion, and shape generation. Consequently, extracting a topologically faithful and geometrically accurate mesh surface from SDFs has become increasingly important.

However, during mesh extraction, fully preserving geometric fidelity and capturing the true topological structure remains challenging due to the inherent implicit nature of SDFs. Traditional approaches, such as Marching Cubes~\cite{10.1145/37401.37422,DBLP:journals/corr/abs-2106-11272,doi1991efficient} and Dual Contouring~\cite{10.1145/566654.566586}, as well as some learning-based variants~\cite{liao2018deep,chen2022ndc,remelli2020meshsdf}, extract the mesh surface by discretizing space into uniform cubic units and analyzing how the surface intersects the grid cells. They assume that the surface within each grid cell can be deemed sufficiently simple, thereby heavily relying on the pre-defined resolution of the grid. When the geometry is intricate—for instance, when the local feature size (LFS) is smaller than the cell size—geometric features (such as geometric details, feature lines, tip points, tiny tubes, and thin plates) may not be completely recovered or might even be lost in the extracted triangle mesh.

Recently, several geometry-aware mesh extraction algorithms~\cite{sellán2023reachspherestangencyawaresurface, Sellan2024RFTA, renınst2024mcgridsmontecarlodrivenadaptive} have been proposed for adaptive mesh extraction. For instance, Reach for the Spheres (RFTS)~\cite{sellán2023reachspherestangencyawaresurface} operates under the assumption that each SDF sample corresponds to a spherical region that lies entirely either inside or outside the surface. Its extension, Reach for the Arc (RFTA)~\cite{Sellan2024RFTA}, leverages additional information from the SDF samples. Both methods perform well at low resolutions, when only a limited number of SDF samples are queried. However, RFTS struggles to capture high-frequency geometric variations, and RFTA is unable to modify the initial topology during the refinement process. In contrast, McGrids~\cite{renınst2024mcgridsmontecarlodrivenadaptive} formulates adaptive grid construction as a probability sampling problem, solved via a Monte Carlo process. Despite its advantages, it still struggles to resolve tiny details and small topological features.

\begin{figure}[h]
	\centering
\begin{overpic}
[width=.98\linewidth]{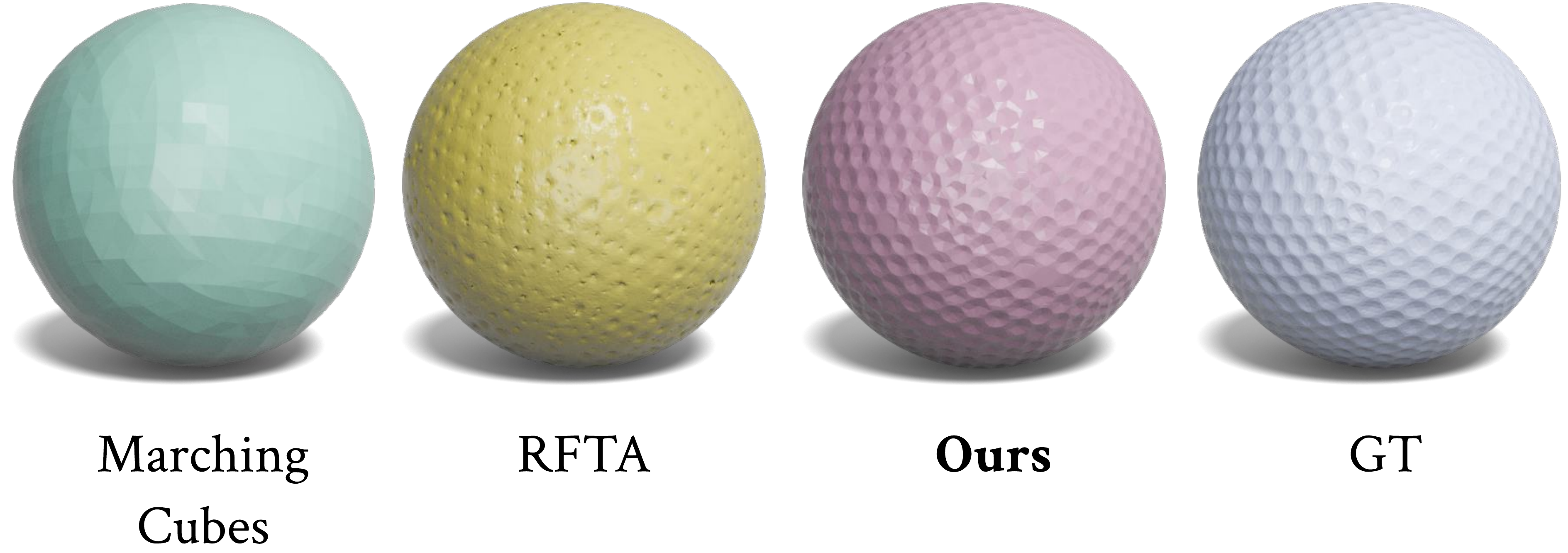}
\end{overpic}
\caption{
Compared with existing isosurface extraction approaches, our method excels at recovering intricate surface details.
}
\vspace{-3mm}
\label{fig:golf}
\end{figure}

In this paper, we improve adaptive mesh extraction approaches through two key innovations. First, we exploit the observation that a collection of samples with signed distance function (SDF) values defines a power diagram where squared SDF values act as weights. This power diagram contains a subset of facets that approximate the underlying surface. By incrementally constructing the power diagram with added samples, its dual tetrahedralization induces an adaptive and refined grid structure, enabling mesh extraction that closely aligns with the true surface. 
Second, we propose a novel point insertion strategy that prioritizes regions with the largest discrepancy between the current mesh and the underlying continuous surface. This simple yet effective strategy optimizes sample placement to maximize reconstruction accuracy.  
We further integrate these techniques into a progressive framework that requires only local updates per iteration, minimizing computational overhead. The algorithm terminates once a predefined accuracy tolerance is achieved.  
Extensive experiments demonstrate our method’s superior mesh extraction capabilities compared to state-of-the-art approaches, particularly for models with intricate details and complex topologies.  

Our contributions include:  
\begin{itemize}
    \item A power diagram formulation that leverages SDF samples as weights, where incremental construction of the diagram and its dual tetrahedralization enable adaptive mesh extraction. 
    \item A point insertion strategy that optimally targets regions with the largest reconstruction discrepancy between the mesh and the continuous surface. 
    \item A progressive framework requiring only local updates per iteration, ensuring low computational cost. The algorithm terminates when the accuracy tolerance is met. 
\end{itemize}

\section{Related Work}
\label{sec:Related work}

Isosurface extraction from implicit fields is a fundamental problem in geometry processing with extensive applications. Existing approaches can be broadly categorized into two paradigms: operating on pre-defined grids and progressively evolving an initial surface estimate to match the target isosurface.

\subsection{Extraction on Pre-defined Grids}
The most widely used approach is the Marching Cubes \cite{10.1145/37401.37422}, which uniformly partitions the space into cubic elements and determines surface geometry by examining field values at cube corners using a carefully designed lookup table. The Marching Tetrahedrons method \cite{akio_doi__1991} extends this idea to tetrahedral elements, offering better flexibility in handling complex geometries. Dual contouring techniques \cite{10.1145/566654.566586} further enhance feature preservation by incorporating gradient information.

To improve efficiency, hierarchical data structures like Octree \cite{1372234,:10.2312/SGP/SGP07/125-133} can be used to increase resolution near surface regions and speed up computation. Although these approaches~\cite{6425062,shu1995adaptive} operate adaptively, they often use the signs at grid points to predict whether the surface intersects a grid cell, thus lacking a strong ability to detect unknown geometry.

The biggest advantage of these methods lies in their efficiency and parallelizability. However, they heavily rely on the resolution of pre-defined grid cells. Furthermore, locally refining the grid cells can easily cause mesh inconsistency between adjacent cells.

On this basis, recent data-driven methods, including Neural Marching Cubes \cite{DBLP:journals/corr/abs-2106-11272} and Neural Dual Contouring \cite{chen2022ndc}, leverage learned features to recover surface details without explicit gradient information. Another differentiable approach \cite{maruani2023voromeshlearningwatertightsurface} generates surfaces by optimizing point locations within Voronoi cells.

\subsection{Progressive Evolution Methods}
Progressive evolution methods offer an alternative paradigm. Early approaches in this category include surface tracking methods like Marching Triangles \cite{560840}, which grow meshes from seed points, and Delaunay refinement approaches \cite{10.1145/160985.161150,Wang2016OnVS} that incrementally improve mesh quality through careful vertex insertion.

This paradigm has gained significant attention recently due to its ability to iteratively correct surface topologies and refine geometric accuracy. Generally speaking, these methods utilize more information from the SDF to guide the refinement process. For instance, Reach for the Spheres \cite{sellán2023reachspherestangencyawaresurface} demonstrates that by exploiting sphere-surface tangency conditions, the surface can be better recovered even at low resolution. However, its fixed topology during optimization limits its applicability to complex models. Reach for the Arcs \cite{Sellan2024RFTA} addresses this topological limitation but requires repeatedly calling Poisson reconstruction, resulting in processing times orders of magnitude slower than conventional approaches, which makes it impractical for many applications.

McGrids \cite{renınst2024mcgridsmontecarlodrivenadaptive} offers another recent innovation in progressive extraction through adaptive Voronoi diagrams for incremental sampling. It separates the phase of SDF-guided sampling from the phase of surface extraction. In this paper, our algorithm alternately adds sample points and locally updates the discrete surface. The two operations are highly interrelated: the position of the newly added sample is guided by the current surface, while the surface is updated within the region where the new point is inserted.

\begin{figure}[h]
	\centering
\begin{overpic}
[width=.98\linewidth]{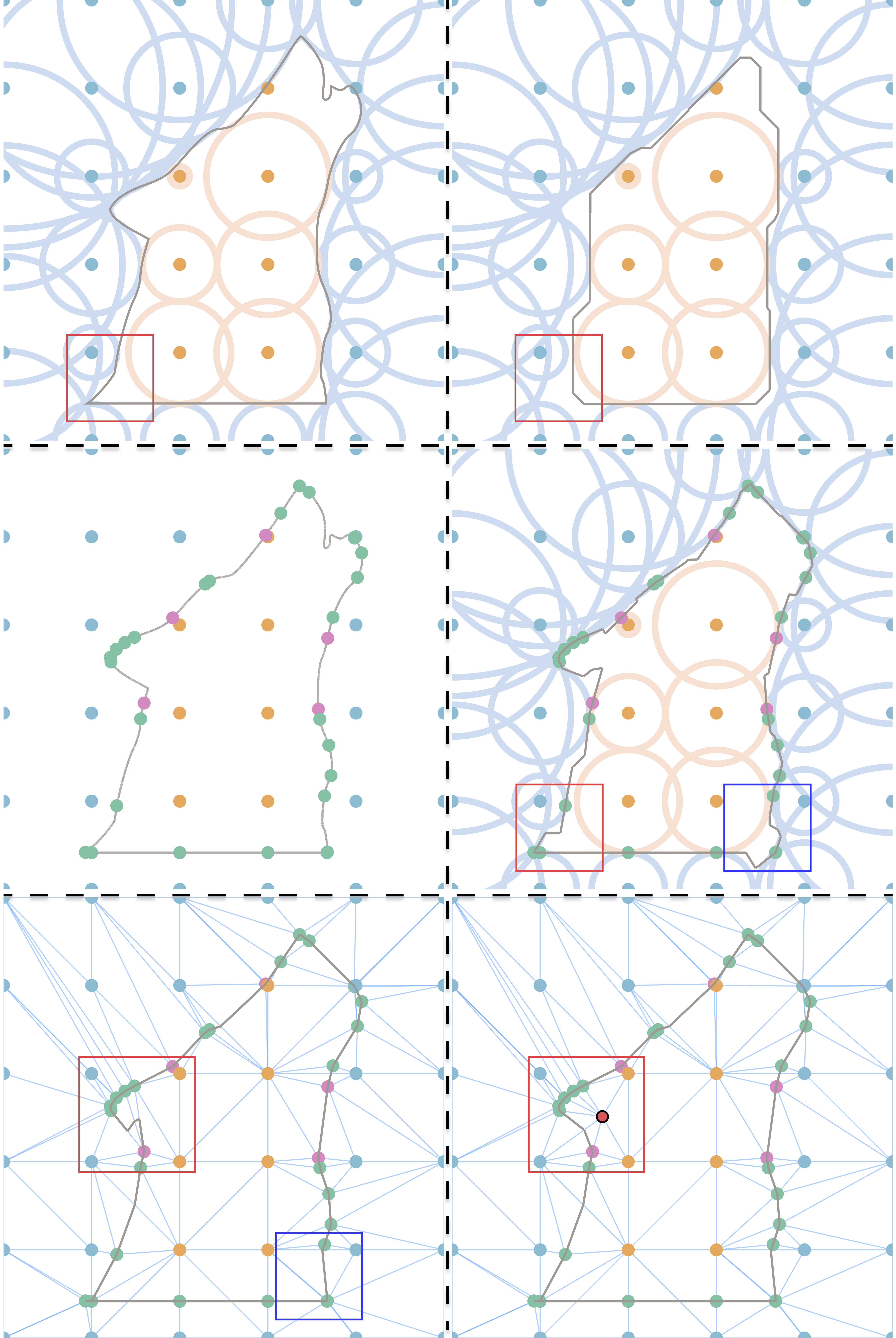}
\put(0,98){(a)}
\put(34,98){(b)}
\put(0,63.8){(c)}
\put(34,63.8){(d)}
\put(0,30){(e)}
\put(34,30){(f)}

\end{overpic}
\caption{
(a) Each sampling site generates a sphere with its center at the site and radius equal to its distance to the surface, creating tangential contact with the surface.
(b) The power diagram computed from these sites produces boundaries that lie between inner and outer spheres without intersections, but fails to maintain tangency with the spheres (highlighted in red).
(c) We incorporate projection points as additional power diagram sites, resulting in four distinct categories of sites.
(d) The enhanced power diagram computation achieves tangency with spheres (red box) but introduces geometric instabilities manifesting as zigzag patterns (blue box).
(e) By extracting the surface directly from the regular Delaunay, we eliminate the zigzag artifacts (blue box) while preserving the geometric properties.
(f) Our power diagram-based representation naturally supports incremental construction. When inserting a new point (shown in red), all structural changes in both the regular Delaunay and the reconstructed surface are confined within the local region (red box).
}
\label{fig:powerRepre}
  \vspace{-2mm}
\end{figure}

\section{Overview}
\label{sec:Overview}
Our method extracts isosurfaces from Signed Distance Fields using both gradient and value information as input. The algorithm consists of two primary components that work in a tightly coupled manner:

First, a power diagram-based isosurface extraction framework. Given spatial sampling points and their gradient information, this framework extracts a surface that approximates the underlying implicit geometry. It supports incremental insertion of sampling points and enables localized updates to the extracted surface, avoiding costly global recomputation.

Second, a novel point insertion strategy that prioritizes regions exhibiting the largest discrepancy between the current mesh and the underlying continuous surface. This strategy identifies optimal locations for new sampling sites that, when incorporated into the power diagram, effectively improve problematic regions in the current reconstruction.

These two components alternate in a continuous cycle—extraction, evaluation, refinement, and re-extraction—progressively enhancing surface fidelity until reaching either a predefined point budget or quality threshold. This framework implements a progressive approach that requires only local updates per iteration, thus minimizing computational overhead. The following sections provide detailed explanations of each component and their collaborative role in isosurface extraction algorithm.

\section{Power Diagram-based Isosurface Extraction}
\label{sec:PowerDiagramforSurfaceRepresentation}

We present our power diagram-based isosurface extraction approach in four progressive steps: First, we establish the theoretical foundation for isosurface extraction using power diagrams. Second, we demonstrate how incorporating projection points as additional sampling sites enhances extraction quality. Third, we employ the dual of the power diagram—regular Delaunay tetrahedralization—to overcome inherent instabilities in power diagram construction. Finally, we validate that this extraction framework supports localized incremental updates without requiring global recomputation.

\subsection{Theoretical Foundation}
RFTA \cite{Sellan2024RFTA} highlighted an essential geometric property: spheres constructed with centers at sampling points and radii equal to their unsigned surface distances maintain tangential contact with the surface. 
Figure~\ref{fig:powerRepre}(a) illustrates this property in the 2D case. In our framework, we classify sampling points as $\mathbf{P}_-$ and $\mathbf{P}_+$ based on their negative and positive signed distance values, forming inner spheres $\mathbf{S}_-$ and outer spheres $\mathbf{S}_+$ with radii equal to their unsigned distance values, as shown in Figure~\ref{fig:powerRepre}(a).
By utilizing these points as power diagram sites and their squared surface distances as weights, we establish the following theorem:

\begin{wrapfigure}[11]{r}{0.36\linewidth}
  \begin{center}
  \vspace{-10mm}
   \begin{overpic}[width=0.99\linewidth]{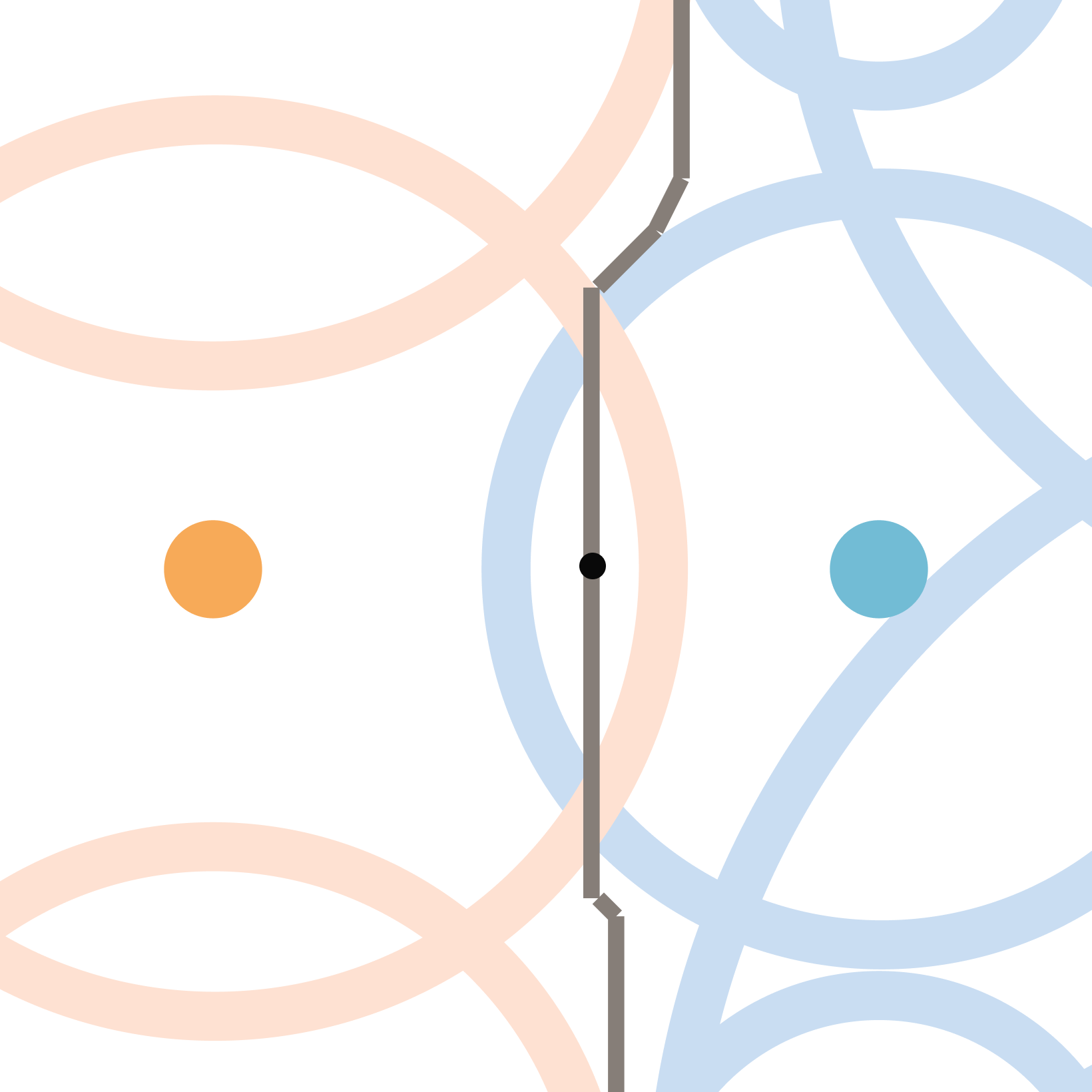}
      \put(49,50){$x$}             
      \put(10,52){$p^-_i$}           
      \put(70,52){$p^+_j$}           
    \end{overpic}
  \end{center}
\end{wrapfigure}
\begin{theorem}
For two sets of spheres $\mathbf{S}^-$ and $\mathbf{S}^+$, the boundary of the power diagram computed by using sphere centers as sites and squared distances as weights forms a valid surface between the two sets of spheres without intersecting any sphere.
\end{theorem}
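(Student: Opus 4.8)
The plan is to unpack the definition of the power diagram and verify two separate claims: (i) every facet of the power-diagram boundary that separates a cell of an $\mathbf{S}^-$ site from a cell of an $\mathbf{S}^+$ site avoids the interiors of both spheres, and (ii) these separating facets actually form a connected, manifold surface (rather than a degenerate or self-intersecting set). Recall that the power cell of a site $p_i$ with weight $w_i=d_i^2$ is $\{x : \|x-p_i\|^2 - w_i \le \|x-p_j\|^2 - w_j \ \forall j\}$, and the bisector between $p_i^-$ and $p_j^+$ is the locus where $\pi_i(x) := \|x-p_i^-\|^2 - d_i^2 = \|x-p_j^+\|^2 - d_j^2 =: \pi_j(x)$, where $\pi_i$ is precisely the \emph{power} of $x$ with respect to the sphere $S_i^-$.

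First I would establish claim (i), which is the geometric heart of the statement. A point $x$ lies strictly inside sphere $S_i^-$ iff $\pi_i(x) < 0$, and strictly outside iff $\pi_i(x) > 0$; the same holds for $S_j^+$ with $\pi_j$. Suppose $x$ lies on a boundary facet shared by the cells of $p_i^-$ and $p_j^+$. Then $\pi_i(x) = \pi_j(x)$, and moreover $\pi_i(x) \le \pi_k(x)$ for all other sites $k$, in particular $\pi_i(x)$ is the minimum power value over all sites at $x$. The key observation is a sign argument on this common value $t := \pi_i(x) = \pi_j(x)$. If $t < 0$, then $x$ is strictly inside $S_j^+$, i.e. inside an outer sphere — but I would argue (using the tangency hypothesis from the theorem setup, i.e. each outer sphere lies entirely outside the surface and each inner sphere entirely inside) that no point can simultaneously be inside an outer sphere and have minimal power realized by an inner-sphere site; more directly, I would show the facet point $x$ must satisfy $t \ge 0$ because otherwise $x \in \mathrm{int}(S_j^+)$, and the center $p_j^+$ then lies in the outer region while $x$'s proximity forces a contradiction with $\pi_i(x)$ being minimal. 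Symmetrically $t \le 0$ from the inner side is impossible for analogous reasons, so in fact $t = 0$ on the separating facets in the idealized tangent case, or more robustly $t \ge 0$ and $t \le 0$ pin down that $x$ is on or outside both spheres. Thus the separating facet never penetrates either sphere's interior. The main obstacle here is being careful about what ``between the two sets of spheres'' means precisely and stating the minimal hypotheses (tangency, or merely that inner spheres are disjoint from outer spheres) under which the sign argument closes without circularity.

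Second I would handle claim (ii), that the union of separating facets is a well-defined surface. Here I would invoke the standard structure theory of power diagrams: the power diagram is a polyhedral cell complex dual to the regular (weighted Delaunay) triangulation, so its facets, edges, and vertices fit together consistently, and the subset of facets separating a ``$-$'' cell from a ``$+$'' cell forms a subcomplex. Because every site is either inner or outer, and the ambient space is connected with both an unbounded outer region and (assuming the $\mathbf{P}^-$ set is nonempty) bounded inner components, a continuity/connectivity argument shows that moving from any $p_i^-$ cell to any $p_j^+$ cell must cross at least one such separating facet, so the separating subcomplex is nonempty and separates space into the union of ``$-$'' cells and the union of ``$+$'' cells. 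That it is manifold (each separating facet's edges are shared by exactly two separating facets) follows from the local combinatorics around a power-diagram edge: the cells cyclically arranged around an edge alternate in a controlled way, and a parity argument gives an even number of sign changes, but the nondegeneracy from the weights ensures it is exactly two in the generic case. I would note that at this stage I am describing the generic/idealized situation and that degeneracies (cocircular sites, coplanar facets) are handled by symbolic perturbation, consistent with how Figure~\ref{fig:powerRepre} motivates switching to the regular Delaunay dual later. The hard part overall is claim (i)'s sign argument — making the ``no facet point lies inside a sphere'' reasoning rigorous and self-contained given only the tangency property — whereas claim (ii) largely reduces to citing and specializing known facts about power diagrams.
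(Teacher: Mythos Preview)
Your decomposition into (i) non-intersection and (ii) manifoldness is sensible, and the power/sign idea in (i) is the right tool. But your execution of (i) has a real gap: the contradiction you state does not close the argument. You write that if $t<0$ then $x\in\mathrm{int}(S_j^+)$ and ``the center $p_j^+$ then lies in the outer region while $x$'s proximity forces a contradiction with $\pi_i(x)$ being minimal.'' Nothing about $p_j^+$ being outside the surface, or about minimality of $\pi_i$, yields a contradiction here. The missing observation is that $t<0$ \emph{simultaneously} says $x\in\mathrm{int}(S_i^-)$ (since $\pi_i(x)=t<0$) and $x\in\mathrm{int}(S_j^+)$ (since $\pi_j(x)=t<0$). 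Hence $S_i^-$ and $S_j^+$ overlap, contradicting the basic SDF property that an inner sphere and an outer sphere are disjoint (each is tangent to the zero level set from opposite sides). That single line is the whole proof in the paper; you even list ``inner spheres are disjoint from outer spheres'' as a candidate hypothesis, but then never invoke it in the argument.

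Relatedly, your claim that ``in fact $t=0$ on the separating facets'' is incorrect and not needed. The theorem only asserts non-intersection, i.e.\ $t\ge 0$; the facets need not be tangent to any sphere (indeed Figure~\ref{fig:powerRepre}(b) explicitly highlights the lack of tangency, and the later addition of projection points is what enforces it). There is no ``$t\le 0$'' half to the argument.

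As for (ii), the paper does not prove it at all: its proof is exactly the three-line contradiction above and treats ``forms a valid surface'' informally, relying on the standard structure of power diagrams. Your sketch for (ii) is reasonable as far as it goes, but it is extra work beyond what the paper establishes, and some of the assertions (e.g.\ that the separating subcomplex is manifold with exactly two sign changes around each edge ``in the generic case'') would themselves need care to justify.
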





\begin{proof} We prove by contradiction, as illustrated in the inset figure. Assume a power diagram boundary intersects with an inner sphere. This implies there exists a point $x$ on the power boundary inside this inner sphere, with a negative power distance to some inner site $p^-_i$: \begin{equation} |x - p^-_i|^2 - \phi^2(p^-_i) < 0 \end{equation}

Here, $p^-_i$ represents an inner point (the superscript ``$-$" denotes an interior point), $i$ is the index, and $\phi(p^-_i)$ is the distance from $p^-_i$ to the surface.

Since $x$ lies on the power diagram boundary, there must exist an outer site $p^+_j$ (where the superscript ``$+$" denotes an exterior point) with equal power distance: \begin{equation} |x - p^+_j|^2 - \phi^2(p^+_j) = |x - p^-_i|^2 - \phi^2(p^-_i) < 0 \end{equation}

This indicates that $x$ also lies inside the outer sphere centered at $p^+_j$, creating an intersection between an inner and outer sphere at point $x$, as shown in the inset. This contradicts the fundamental property that inner and outer spheres cannot intersect, thereby proving the theorem. \end{proof}

This theorem establishes that the power diagram boundary provides a feasible surface approximation by maintaining the crucial non-intersection property with both sets of spheres, as illustrated in a 2D example in Figure~\ref{fig:powerRepre}(b).
\subsection{Surface-Tangent Points Integration}
\label{sec:Surface-TangentPointGeneration}
Since SDF representation provides readily accessible gradient information, we can easily compute surface tangent points to enhance our reconstruction. For each sampling point $p_i$, we obtain its projection $q_i$ onto the zero-level surface as shown in the inset figure.

\begin{wrapfigure}{l}{0.3\linewidth}
  \begin{overpic}[width=1.02\linewidth]{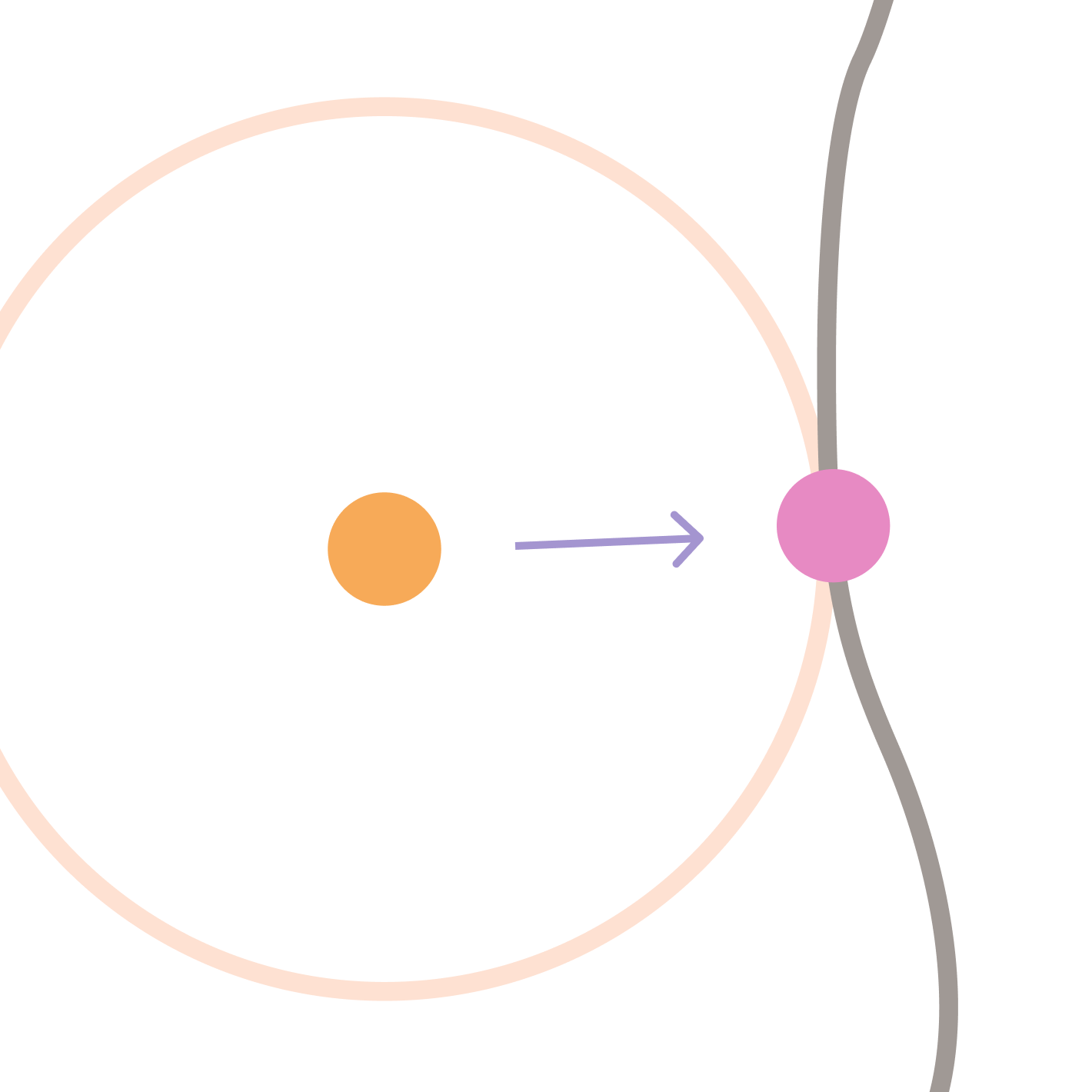}
    \put(18,55){$p_i$}
    \put(62,55){$q_i$}
  \end{overpic}
  \vspace{-15pt}
\end{wrapfigure}

We incorporate these projection points as additional power diagram sites alongside the original sampling points. Our power diagram sites are then classified into two categories, with $\mathbf{P}_{-}$ and $\mathbf{P}_{+}$ representing exterior and interior sampling points, and $\mathbf{P}_{-}^{\perp}$ and $\mathbf{P}_{+}^{\perp}$ denoting their respective surface projections:

\begin{itemize}
    \item Category 1: $\mathbf{P}_{-}$ and $\mathbf{P}_{+}^{\perp}$
    \item Category 2: $\mathbf{P}_{+}$ and $\mathbf{P}_{-}^{\perp}$
\end{itemize}

This classification creates power diagram boundaries that form planes tangent to the surface at the projection points. As a result, our reconstructed surface maintains proper geometric contact with the original surface at these projection points, significantly improving the fidelity of geometric features.

Figure~\ref{fig:powerRepre}(c) and (d) illustrate this point cloud classification and the corresponding power diagram extraction results in 2D scenarios.
\subsection{Stable Surface Extraction}
\label{sec:ModifiedDualityRulesforMeshConstruction}
While power diagram boundaries ensure non-intersection with inner and outer spheres and maintain surface tangency, they inherently produce geometric instabilities in practice, as highlighted in the blue box in Figure~\ref{fig:powerRepre}(d). Leveraging the dual relationship between power diagrams and regular Delaunay tetrahedralization, we propose extracting the surface mesh $\mathcal{M}$ directly from the regular Delaunay tetrahedralization $\mathcal{D}$ by processing each tetrahedron containing different types of sites. This approach preserves the spatial guidance properties of power diagrams while eliminating their inherent geometric discontinuities.

\begin{wrapfigure}[9]{r}{0.5\linewidth}
  \begin{center}
 \hspace{-8mm}
   \begin{overpic}[width=1.0\linewidth]{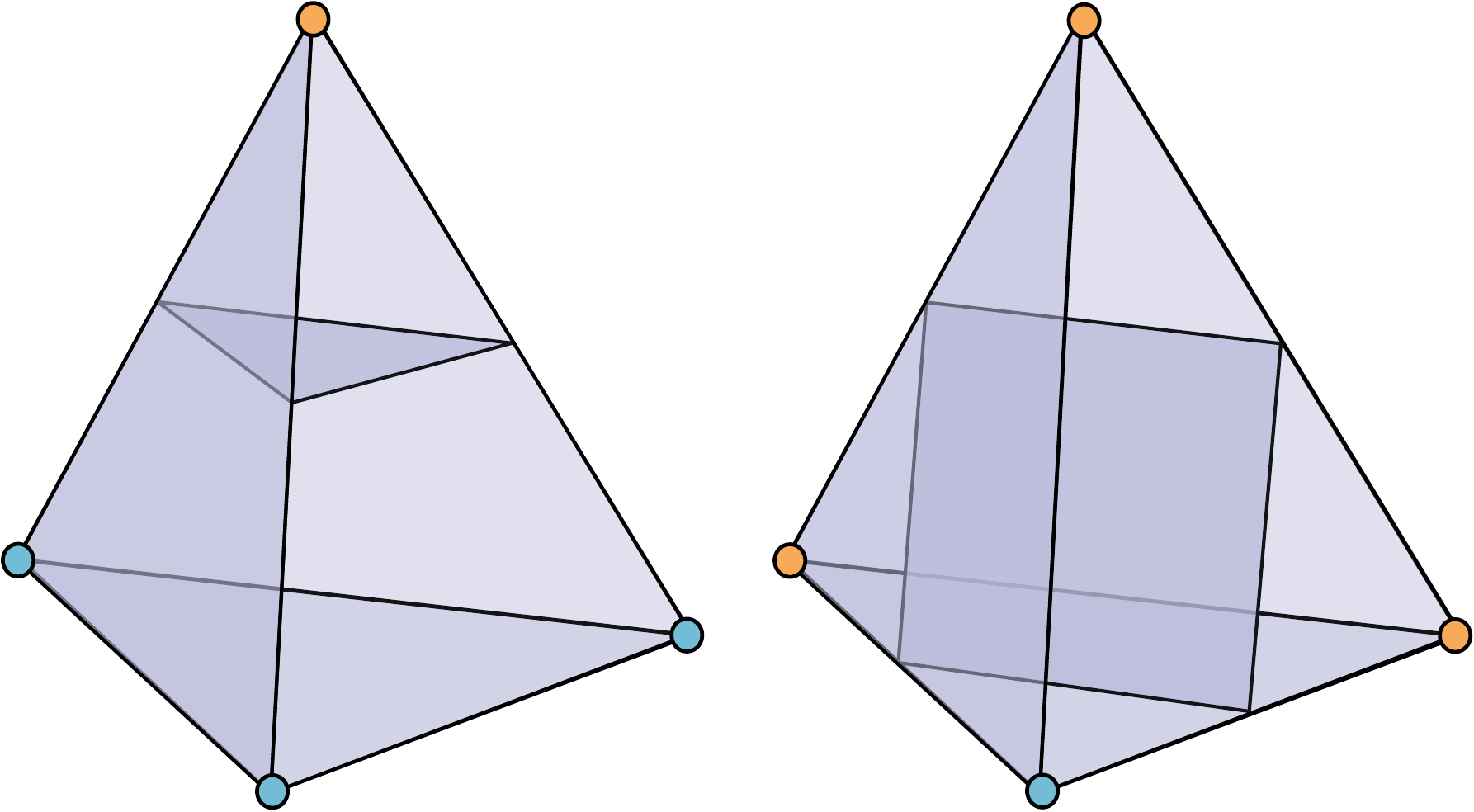}
    \end{overpic}
  \end{center}
\end{wrapfigure}
For each tetrahedron containing different types of sites, there are only two possible distributions: a 2-2 split or a 1-3 split, as shown in the inset. Consequently, the surface extracted within the tetrahedron is either triangular or quadrilateral.

For each edge connecting different types of sites, we extract a corresponding vertex using the following dualization rules, which depend on the properties of the two vertices forming the edge:

\begin{itemize}
\item $\begin{array}{ll}
v_1 \in \mathbf{P}{-} \text{ and } v_2 \in \mathbf{P}{-}^{\perp} & :v_2
\end{array}$
\item $\begin{array}{ll}
v_1 \in \mathbf{P}{+} \text{ and } v_2 \in \mathbf{P}{+}^{\perp} & :v_2
\end{array}$
\item $\begin{array}{ll}
v_1 \in \mathbf{P}{-} \text{ and } v_2 \in \mathbf{P}{+} & :\frac{v_1\phi(v_2)-v_2\phi(v_1)}{\phi(v_2)-\phi(v_1)}
\end{array}$
\item $\begin{array}{ll}
v_1 \in \mathbf{P}{-}^{\perp} \text{ and } v_2 \in \mathbf{P}{+}^{\perp} & :\frac{v_1+v_2}{2}
\end{array}$
\end{itemize}

By connecting these dual vertices according to the cell's topology, we construct a mesh containing both triangular and quadrilateral elements that approximates the surface while avoiding the instabilities inherent in power diagram dualization, as demonstrated in Figure~\ref{fig:powerRepre}(e) for a 2D case.

\subsection{Localized Incremental Update}
\label{sec:LocalizedIncrementalUpdate}
A significant advantage of our approach is its natural adaptability to incremental construction. The regular Delaunay tetrahedralization inherently supports localized updates, allowing us to efficiently process the insertion of new sites without global recomputation. When inserting a new site, only a small neighborhood of existing elements requires modification, and correspondingly, only the associated portions of the dual mesh need to be updated.

This locality property ensures that the computational complexity of each update operation depends only on the size of the affected region, not on the total mesh complexity. As illustrated in Figure~\ref{fig:powerRepre}(f), when inserting a new point (marked in red) in a 2D scenario, all updates are confined within the highlighted red box. This localized behavior makes our approach particularly efficient for adaptive refinement and dynamic surface reconstruction tasks.

\begin{figure*}[h]
	\centering
\begin{overpic}
[width=.98\linewidth]{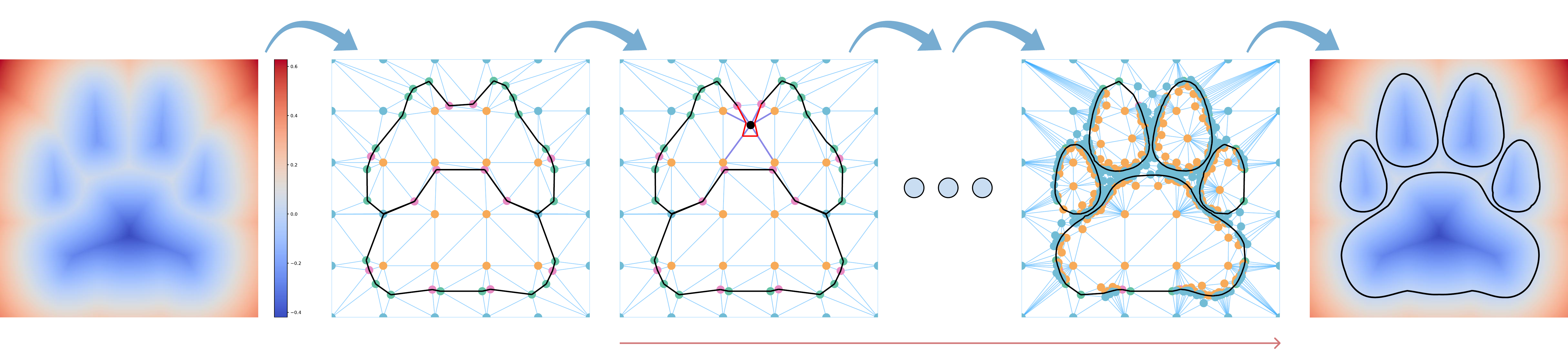}
\put(18.5,21.5){init}
\put(36,21.5){+site}
\put(55,21.5){+site}
\put(62,21.5){+site}
\put(79,21.5){extract}

\put(6,-0.5){Input}
\put(24.5,-0.5){Initialization}
\put(57,-1.5){Evolution}
\put(90,-0.5){Result}

\end{overpic}
\vspace{3mm}
\caption{
Given a signed distance field with associated gradients, we initialize the surface by constructing a regular Delaunay tetrahedralization from uniform samples and their surface projections. The algorithm then iteratively identifies suboptimal geometric elements (triangles in 3D, segments in 2D) and refines the reconstruction through strategic site insertion and incremental Delaunay updates until reaching either the point count limit or convergence threshold. 
}
\label{fig:pipeline}
\end{figure*}

\section{Algotihm}
\label{sec:algorithm}
Starting from a surface extracted from uniformly distributed sampling points, we iteratively refine the surface by identifying regions with suboptimal reconstruction quality and strategically adding new points in those areas until convergence. Since our key algorithmic components are not executed sequentially but rather form a tightly coupled iterative process, we first present the overall pipeline before detailing each critical step.

\subsection{Pipeline}
Our algorithm proceeds through the following steps:

\noindent{\bf Step 1.} Construct regular delaunay~$\mathcal{D}$ from uniform grid vertices and their projections, obtain initial surface $\mathcal{M}$ ~(Section~\ref{sec:PowerDiagramforSurfaceRepresentation}).

\noindent{\bf Step 2.} Calculate the deviation metric $\delta$ between each triangle in $\mathcal{M}$ and the underlying surface~(Section~\ref{sec:TriangleQualityAssessment}) and organize them in max-heap $\mathcal{Q}$.

\noindent{\bf Step 3.} Extract triangle with maximum deviation $\delta_c$ from $\mathcal{Q}$, add corresponding new site $p_c$ and its projection $q_c$~(Section~\ref{sec:pointdetermine}). Update $\mathcal{D}$ and $\mathcal{M}$~(Section~\ref{sec:LocalizedIncrementalUpdate}), compute quality metrics for affected and newly added faces (Section~\ref{sec:TriangleQualityAssessment}) and insert them into $\mathcal{Q}$.

\noindent{\bf Step 4.} If the number of inserted points reaches $k_{max}$ or $\delta_c < \epsilon$, proceed to Step 5. Else return to Step 3.

\noindent{\bf Step 5.} Extract surface.






Each key step in our pipeline is detailed in the corresponding subsections referenced above. For better understanding, we also provide a 2D illustration of this pipeline in Figure~\ref{fig:pipeline}.

\begin{figure}[h]
	\centering
\begin{overpic}
[width=.7\linewidth]{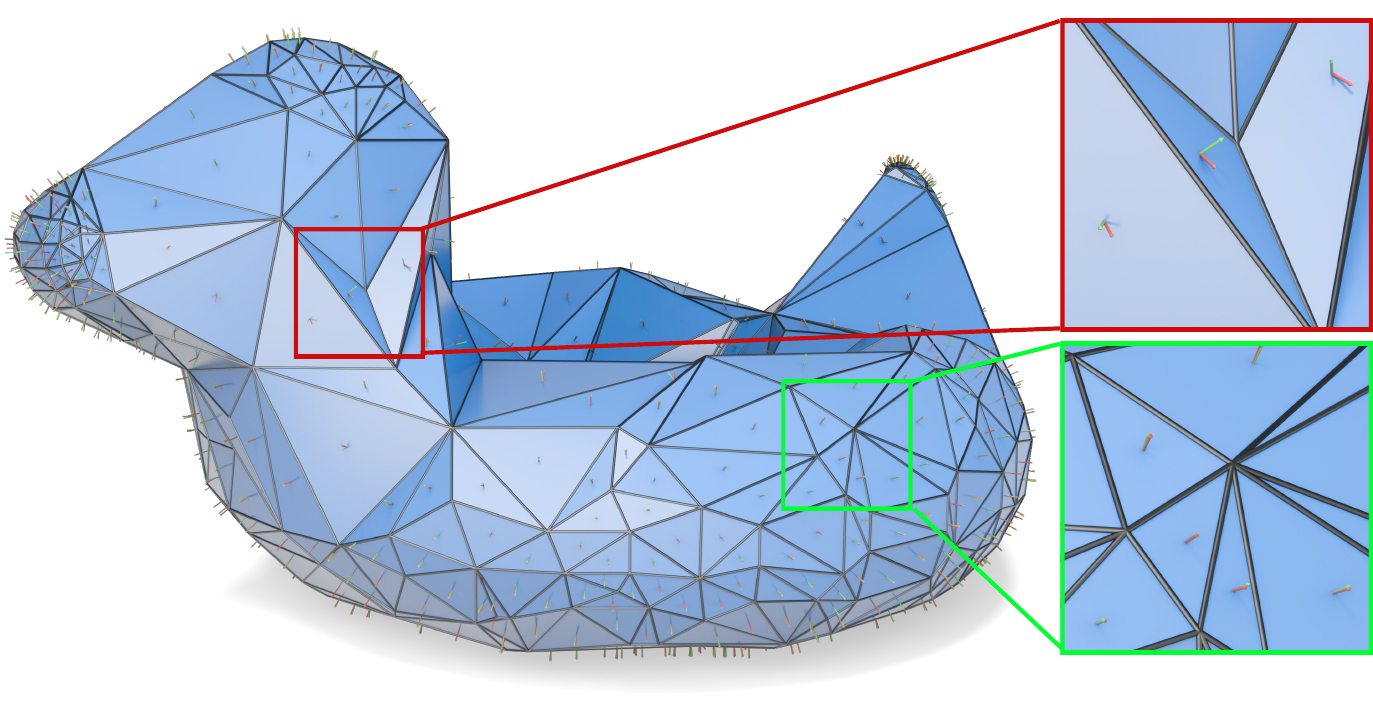}
\end{overpic}
\caption{
When a face is well-aligned with the zero-level set, the SDF gradient directions throughout its interior closely match the face normal (green box). Conversely, faces deviating significantly from the zero-level set exhibit interior points where gradient directions substantially differ from the face normal (red box).
}
\label{fig:triangleAsses}
  \vspace{-5mm}
\end{figure}

\subsection{Isosurface Approximation Quality Evaluation}
\label{sec:TriangleQualityAssessment}
A robust metric for assessing triangle fidelity to the zero-level set is crucial for our adaptive refinement strategy. Our approach leverages the geometric principle that for a triangle perfectly aligned with the isosurface, the SDF gradient $\nabla \phi(\mathbf{x})$ should parallel the triangle normal $\hat{\mathbf{n}}$ at any point $\mathbf{x}$ on the triangle, as illustrated in Figure~\ref{fig:triangleAsses}.

To quantify deviation from this ideal alignment, we measure the orthogonal component of the gradient for $\triangle$:
\begin{equation}
\delta(\triangle) = \int_{\triangle} |\nabla \phi(\mathbf{x}) - (\nabla \phi(\mathbf{x}) \cdot \hat{\mathbf{n}}) \hat{\mathbf{n}}| \, dA
\end{equation}

We select this gradient-based metric over simple SDF values for its enhanced sensitivity to topological variations through normal field analysis, as demonstrated in Figure~\ref{fig:triangleAsses}.

For practical implementation, we approximate the integral by partitioning the triangle into three sub-triangles using angle bisectors and assuming constant gradient within each:
\begin{equation}
\delta(\triangle) \approx \sum_{i=1}^3 A_i |\nabla \phi(\mathbf{c}_i) - (\nabla \phi(\mathbf{c}_i) \cdot \hat{\mathbf{n}}) \hat{\mathbf{n}}|
\end{equation}
where $A_i$ is the area of the $i$-th sub-triangle and $\mathbf{c}_i$ its centroid. This discretization provides an efficient and accurate quality metric for guiding point insertion.

It is worth noting that this gradient-alignment metric might theoretically yield a favorable quality score for a normal-aligned triangle that is positioned at a distance from the actual zero-level set. However, our algorithm incorporates two complementary mechanisms that prevent such cases: (1) Our surface projection step consistently maintains triangles in close proximity to the zero-level set through the use of surface-tangent points, and (2) In the unlikely event that such a misplaced triangle exists, adjacent triangles would exhibit high energy values, which would automatically trigger point insertion during the refinement process to correct the local geometry.

\subsection{Strategic Site Insertion for Local Refinement}
\label{sec:pointdetermine}

After identifying regions with poor approximation quality through our deviation metric $\delta$, we strategically place new sites to improve these problematic areas. For each triangle with high deviation, we select the dual vertex of its corresponding tetrahedron in the regular Delaunay tetrahedralization $\mathcal{D}$ and its surface projection as new insertion sites.

\begin{wrapfigure}[11]{r}{0.6\linewidth}
  \begin{center}
    \includegraphics[width=0.97\linewidth]{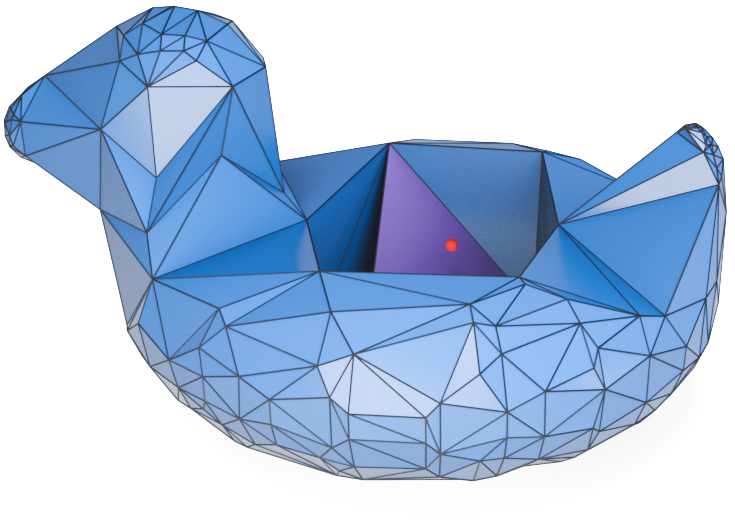}
  \end{center}
\end{wrapfigure}

As shown in the inset figure, the purple tetrahedron generates the problematic triangle, and the red point represents its dual vertex in the power diagram. We add both this point and its surface projection to the regular Delaunay construction to enhance the local reconstruction quality.

This approach is particularly effective because when significant inconsistency is detected between a triangle in $\mathcal{M}$ and the zero-level set, the inserted sites must modify the local topology of precisely that region. The dual vertex serves as an ideal candidate for this purpose due to the one-to-one correspondence between triangles in $\mathcal{M}$ and tetrahedra in $\mathcal{D}$.

\begin{figure*}[h]
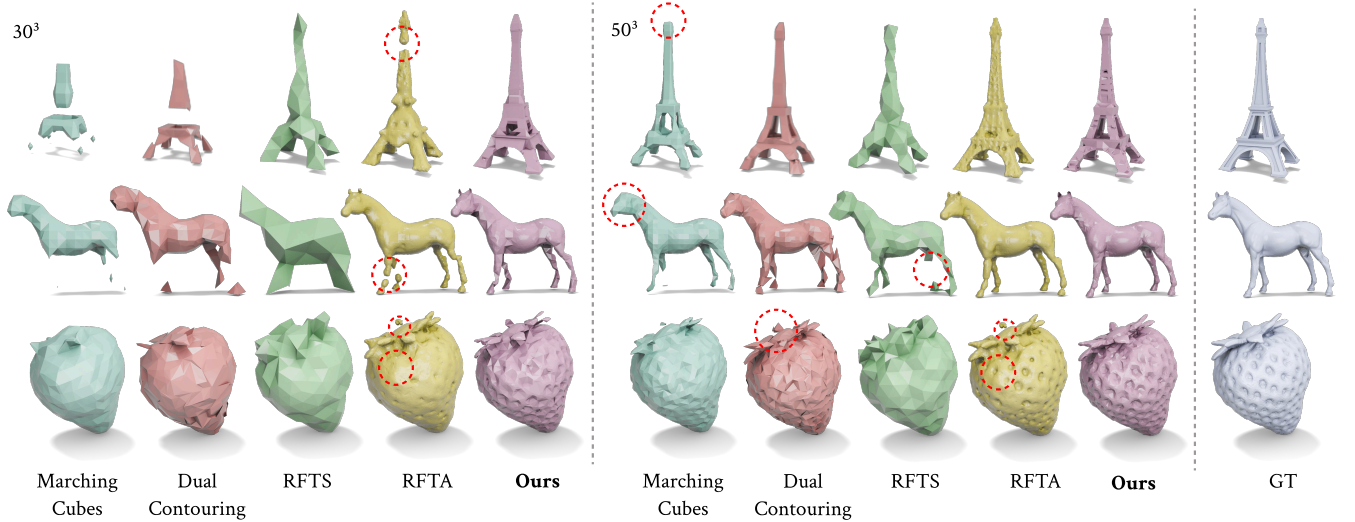

	\centering
\begin{overpic}
[width=.98\linewidth]{sec/figures/input3comparasion.pdf}
\end{overpic}
\caption{
Comparison with state-of-the-art methods on three low-genus inputs. Results are generated using uniform $30^3$ and $50^3$ grid samples. Our method demonstrates superior expressiveness in capturing geometric details.
}
\label{fig:comprasionWithOther}
  \vspace{-2mm}
\end{figure*}
\begin{figure*}[h]
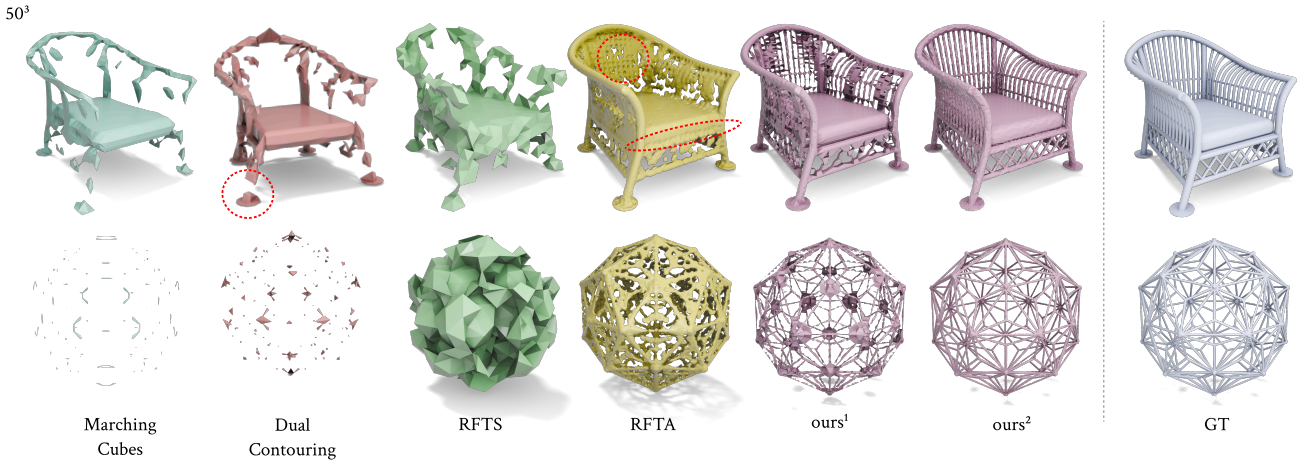

	\centering
\begin{overpic}
[width=.95\linewidth]{sec/figures/high-genus.pdf}
\end{overpic}
\caption{
Comparison with state-of-the-art methods on two high-genus inputs. Previous methods and Ours$^1$ use uniform $50^3$ grid samples, while Ours$^2$ starts from an $8^3$ grid and applies our adaptive sampling strategy until reaching 62500 samples. Our method demonstrates better topological preservation, with further improvements achieved through our adaptive sampling strategy.
}
\label{fig:comparasionWithHighComplex}
  \vspace{-2mm}
\end{figure*}

\section{Evaluation}
\subsection{Experiment Setting}
Our experimental evaluation was conducted on a system equipped with an AMD Ryzen 9 5950X CPU and 72 GB RAM. The test dataset comprised 100 models curated from Thingi10k and models previously used in~\cite{Sellan2024RFTA} and~\cite{sellán2023reachspherestangencyawaresurface}, supplemented with some custom-designed models. 
Following RFTS's~\cite{sellán2023reachspherestangencyawaresurface} experimental design, all models were normalized to fit within a unit cube of $[-1/2, 1/2]^3$, while the surface extraction was performed in the extended domain of $[-1, 1]^3$.
Our comparative experiments focused on two principal scenarios: (1) extraction under fixed input conditions (the core application scenario of RFTS/RFTA) to validate extraction capability, and (2) extraction in continuous fields.

\vspace{3mm}
\noindent
\textbf{Comparison Methods.}
Our experimental evaluation benchmarks the proposed approach against five representative isosurface extraction techniques from the literature: three recent techniques—RFTA (REACH for Arc)~\cite{Sellan2024RFTA}, RFTS (REACH for Sphere)~\cite{sellán2023reachspherestangencyawaresurface}, and MCGrids~\cite{renınst2024mcgridsmontecarlodrivenadaptive}—and two classical methods—Marching Cubes~\cite{10.1145/37402.37422} and Dual Contouring~\cite{10.1145/566654.566586}.

RFTA and RFTS utilize optimization-based approaches that exploit the tangential relationship between sample-centered spheres and surfaces for geometric reconstruction. Here, each sphere is centered at a sample point with its radius defined by the distance to the surface. For RFTS, we employed the default parameters provided in the authors' GPyToolbox implementation~\cite{gpytoolbox}, while RFTA was configured with 20 fine-tuning iterations and 30 maximum points per sphere. MCGrids employs an iterative isosurface extraction approach, whereas Marching Cubes and Dual Contouring implement uniform spatial discretization strategies for surface extraction.

We categorize existing methods into two distinct classes: Direct Extraction Methods, which utilize fixed information for isosurface extraction without additional input during reconstruction (including Marching Cubes, Dual Contouring, RFTS, and RFTA), and Iterative Refinement Methods, which progressively insert new points and adjust results during the extraction process (McGrids).

\subsection{Comparison with Direct Extraction Methods}
Figure~\ref{fig:comprasionWithOther} illustrates comparative analyses at uniform grid resolutions of $30^3$ and $50^3$ over the $[-1,1]^3$ domain, where all algorithms using identical grid point SDF values and gradient information for low-genus model reconstruction.
Marching Cubes and Dual Contouring utilize only local SDF information, resulting in poor detail preservation; 
RFTS cannot handle topological variations, restricting its application to simple models; and RFTA, despite improving upon its predecessor, fails to recover intricate surface details. Our method leverages projection points derived from gradient information with regular Delaunay-based extraction to achieve superior reconstruction quality. This is demonstrated in Figure \ref{fig:comprasionWithOther} with the strawberry model and in Figure \ref{fig:golf} with the golf ball model, where our approach successfully reconstructs surface cavities that RFTA fails to capture. Additionally, at $50^3$ resolution, RFTA's tower reconstruction exhibits notable spherical artifacts.

For high-genus models (Figure \ref{fig:comparasionWithHighComplex}), we present two implementations: Our$^1$, which uses standard $50^3$ grid inputs, and Our$^2$, which employs an adaptive approach starting from an $8^3$ grid and expanding to $50^3$ total points (including both sample points and their projections) using the strategy detailed in Section \ref{sec:algorithm}. While conventional methods (Marching Cubes, Dual Contouring, RFTS) fail to preserve intricate details such as chair backs, and RFTA shows only modest improvement, both of our implementations demonstrate better detail preservation. Our$^2$, through its adaptive point sampling strategy, achieves particularly significant improvements in detail preservation.
\begin{figure}[h]
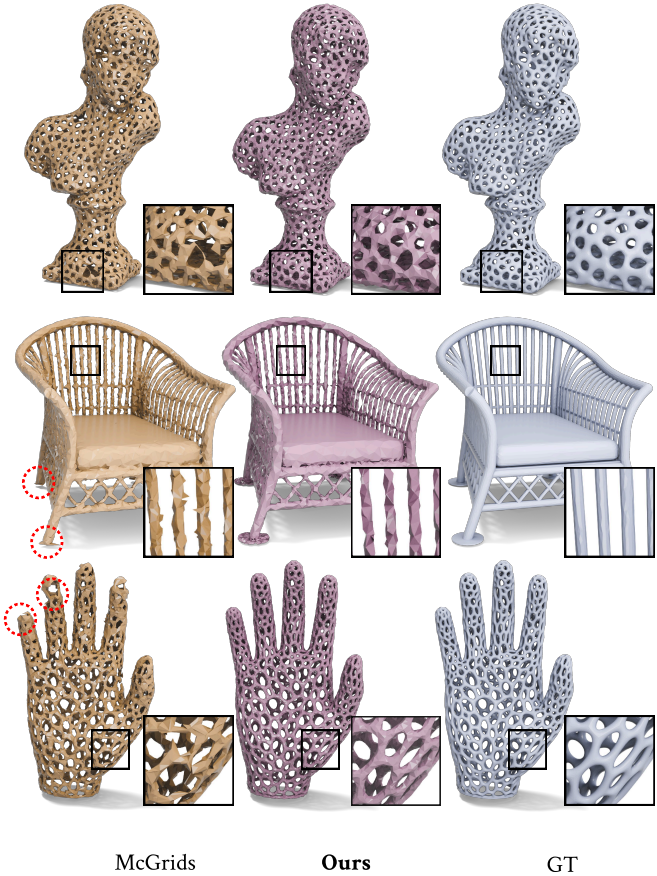

	\centering
\begin{overpic}
[width=.99\linewidth]{sec/figures/figfinger.pdf}
\end{overpic}
\caption{
Comparison with McGrids~\cite{renınst2024mcgridsmontecarlodrivenadaptive}. Both methods employ an almost identical number of sampling points. Our method exhibits superior preservation of fine details and topological features.
}
\label{fig:comparasionWithMcGrids}
  \vspace{-2mm}
\end{figure}

\subsection{Comparison with Progressive Refinement Methods}
McGrids introduces a Monte Carlo-based strategy that formulates adaptive grid construction as a probability sampling problem, employing Delaunay tetrahedralization followed by marching tetrahedra for isosurface extraction. The method implements a two-phase point addition strategy: initial Monte Carlo sampling followed by refinement.
In comparative analysis (Figure~\ref{fig:comparasionWithMcGrids}) using approximately 20,000 sampling points for both methods, we achieve better results.

Our method demonstrates superior topology preservation for complex models, primarily due to our more comprehensive utilization of available information. While McGrids relies solely on SDF sign changes for extraction, our approach leverages both distance values and gradient information through regular Delaunay tetrahedralization, enabling more accurate surface inference and significantly enhancing extraction capabilities.

Furthermore, although MCGrids implements refinement capabilities in its second phase, it fails to detect extraction errors in extreme cases, such as models with radii smaller than its second-phase threshold. In contrast, our normal-based detection mechanism successfully identifies such cases. 

\begin{table}
\caption{Quantitative comparison (CD $\cdot 10^{-5}$) for fixed grid inputs.}
\vspace{-3mm}
\centering\small
\resizebox{0.8\columnwidth}{!}{%
  \renewcommand{\arraystretch}{1.25}
  \setlength{\tabcolsep}{7pt}
  \rowcolors{2}{white}{rowgray}
  \begin{tabular}{c|ccccc}
  \hline
  \rowcolor{headerblue} Resolution & MC & DC & RFTS & RFTA & Ours$^1$ \\
  \hline
  30$^3$ & 267 & 252 & 244 & 11.5 & \textbf{2.68} \\
  40$^3$ & 140 & 114 & 258 & 8.27 & \textbf{1.30} \\
  60$^3$ & 68.4 & 60.9 & 236 & 4.93 & \textbf{0.48} \\
  \hline
  \end{tabular}%
}
\vspace{-2mm}
\label{table:reso}
\end{table}

\begin{table}
\caption{Quantitative comparison on complex models.}
\centering\small
\resizebox{\columnwidth}{!}{%
  \renewcommand{\arraystretch}{1.25}
  \setlength{\tabcolsep}{7pt}
  \rowcolors{2}{white}{rowgray}
  \begin{tabular}{c|cccccc}
  \hline
  \rowcolor{headerblue}  & CD$\cdot 10^5$ $\downarrow$ & NC $\uparrow$ & ECD$\cdot 10^3$ $\downarrow$ & F1 $\uparrow$ & EF1 $\uparrow$ & Time (s) $\downarrow$\\
  \hline
  MC & 0.571 & 0.869 & 0.359 & 0.974 & 0.174 & 11.46 \\
  DC & 0.864 & 0.794 & 0.352 & 0.945 & 0.167 & 61.63 \\
  RFTS & 3.154 & 0.761 & 9.450 & 0.636 & 0.009 & 34.17 \\
  RFTA & 3.050 & 0.836 & 1.485 & 0.729 & 0.007 & $>$ 1h \\
  McGrids & 0.264 & 0.888 & 0.379 & \textbf{0.999} & \textbf{0.191} & 10.53 \\
  Ours$^2$ & \textbf{0.216} & \textbf{0.897} & \textbf{0.348} & \textbf{0.999} & 0.187 & \textbf{10.15} \\
  
  \hline
  \end{tabular}%
}
\vspace{-5mm}
\label{table:complex}
\end{table}
\subsection{Quantitative Evaluation}
To comprehensively evaluate our method's effectiveness, we conducted extensive quantitative comparisons with existing approaches. Our quantitative evaluation comprised two distinct experiments. First, we assessed extraction accuracy with fixed grid inputs—the core scenario addressed by RFTS/RFTA—at three different resolutions across 100 diverse models, with $L_2$ Chamfer Distance (CD) results presented in Table~\ref{table:reso}. Second, we compared extraction quality on 50 complex shapes with intricate geometric features, where MC, DC, RFTS, and RFTA utilized 200$^3$ uniform grids, while McGrids and our proposed method employed 60K adaptive samples, as shown in Table~\ref{table:complex}. 
For all quantitative evaluations, we used 1 million sampled points to calculate the error metrics.
The results demonstrate that our method achieves state-of-the-art extraction quality under fixed uniform sampling conditions while also producing superior surface reconstruction results in minimal computational time when utilizing adaptive SDF sampling strategies.

\begin{figure}[h]
	\centering
\hspace{-2mm}
\begin{overpic}
[width=1\linewidth]{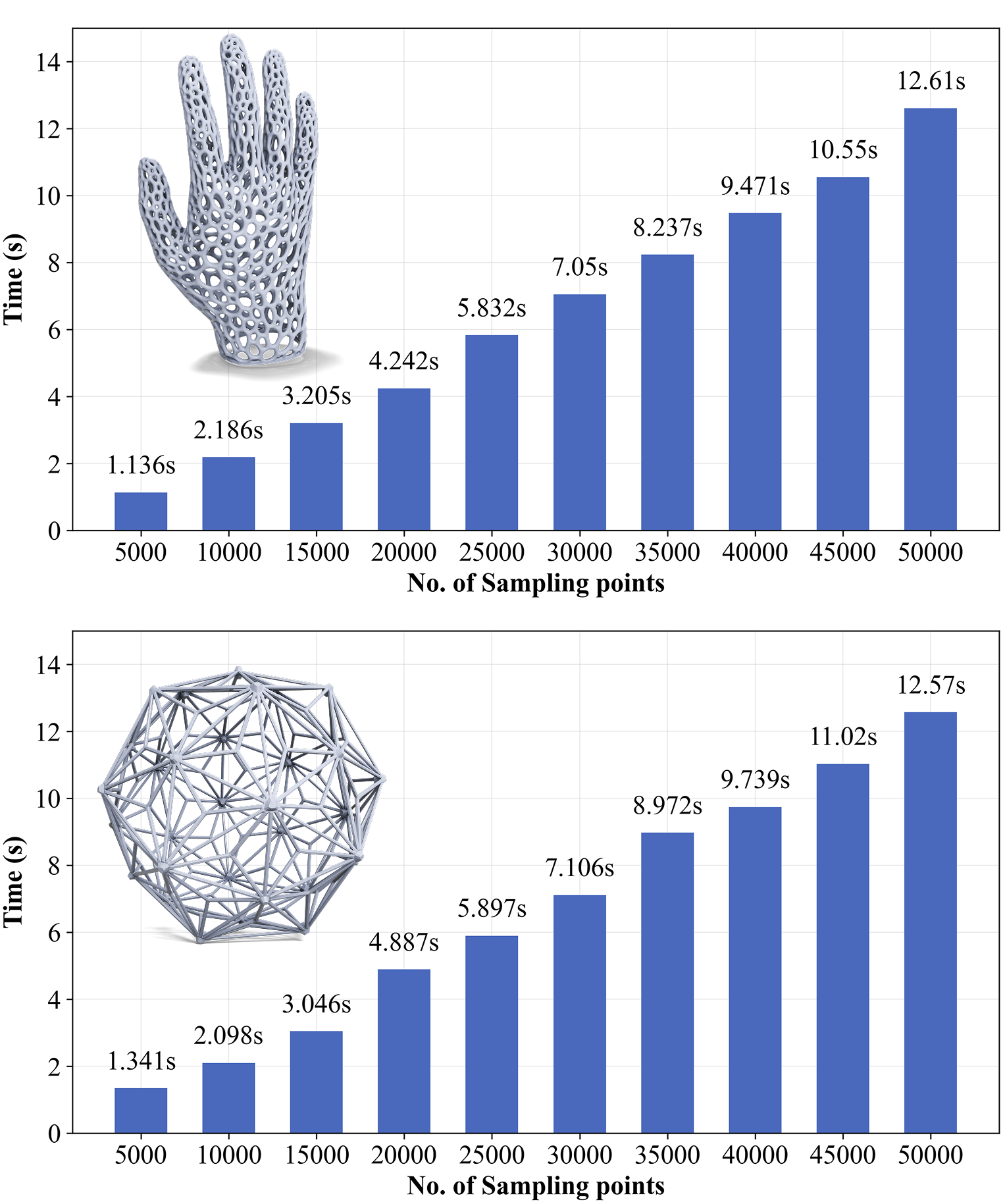}
\end{overpic}
\caption{
Performance scaling of our algorithm on two complex models with increasing sampling points. Benefiting from our localized update strategy, the computation time exhibits approximately linear growth with respect to the number of sampling points.}
\label{fig:time}
\end{figure}

\begin{figure*}[h]
	\centering
\begin{overpic}
[width=0.99\linewidth]{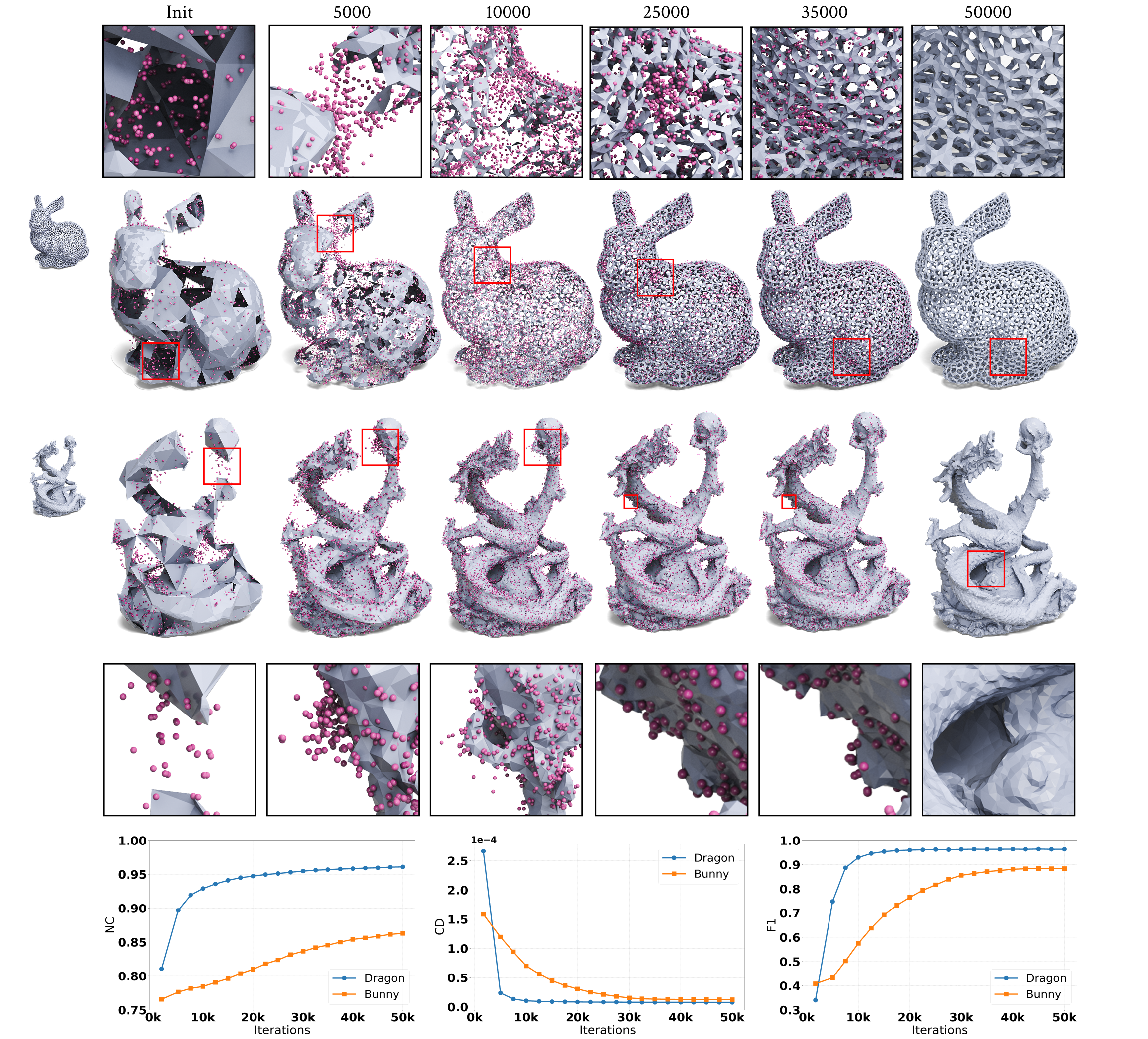}
\end{overpic}
\caption{
Intermediate stages of our algorithm demonstrated on two classic models with high genus and rich surface details. The process illustrates two key aspects of our site placement strategy: (1) Topology correction - shown in the first model, where new sites are strategically added to either establish necessary connections or separate incorrectly merged regions; (2) Geometric refinement - demonstrated in the second model, where progressive site addition continuously enriches surface details.
}
\label{fig:IterativeEvolutionAnalysis}
  \vspace{-2mm}
\end{figure*}

\subsection{Computational Efficiency Analysis} Figure~\ref{fig:time} presents a analysis of our algorithm's computational performance, measured by evaluating execution time against the number of iteratively inserted sampling points on two topologically complex models. Our localized update strategy demonstrates significant efficiency advantages: the time required to insert each new point remains approximately constant regardless of the total number of points already incorporated into the regular Delaunay tetrahedralization. This locality property enables our method to achieve computational complexity that scales approximately linearly with point count ($O(n)$), avoiding the higher-order complexity typically associated with global mesh reconstruction approaches. The empirical measurements confirm that our incremental framework maintains consistent performance even as the sampling density increases, validating the effectiveness of our design decision to prioritize localized updates over global recomputation during the progressive refinement process.

\subsection{Progressive Topological and Geometric Convergence Analysis}
Figure~\ref{fig:IterativeEvolutionAnalysis} illustrates the progressive refinement of our algorithm on two representative models: one with high-genus topology and another with intricate surface details. Beginning with a uniform initialization of $15^3$ sampling points, the algorithm sequentially incorporates new sites until reaching the target of 50,000 sites. At each intermediate stage, highlighted sites indicate newly added positions in the transition to the subsequent refinement level. The visualization reveals two fundamental mechanisms of our adaptive site insertion strategy: (1) Topology correction—demonstrated in the high-genus model, where sites are strategically positioned to either establish critical connections or separate incorrectly merged regions; and (2) Geometric refinement—evidenced in the detailed model, where the progressive addition of sites systematically enhances surface feature resolution. These results confirm the efficacy of our quality-driven insertion approach in simultaneously improving topological accuracy and geometric fidelity through targeted, iterative refinement.

\subsection{Ablation Study}
To systematically evaluate the critical components of our method, we conducted comprehensive ablation studies on two fundamental aspects: (1) the contribution of projection points in the isosurface extraction process, and (2) the efficacy of our quality-driven adaptive point insertion strategy.
 
\begin{figure}[h]
	\centering
\begin{overpic}
[width=1\linewidth]{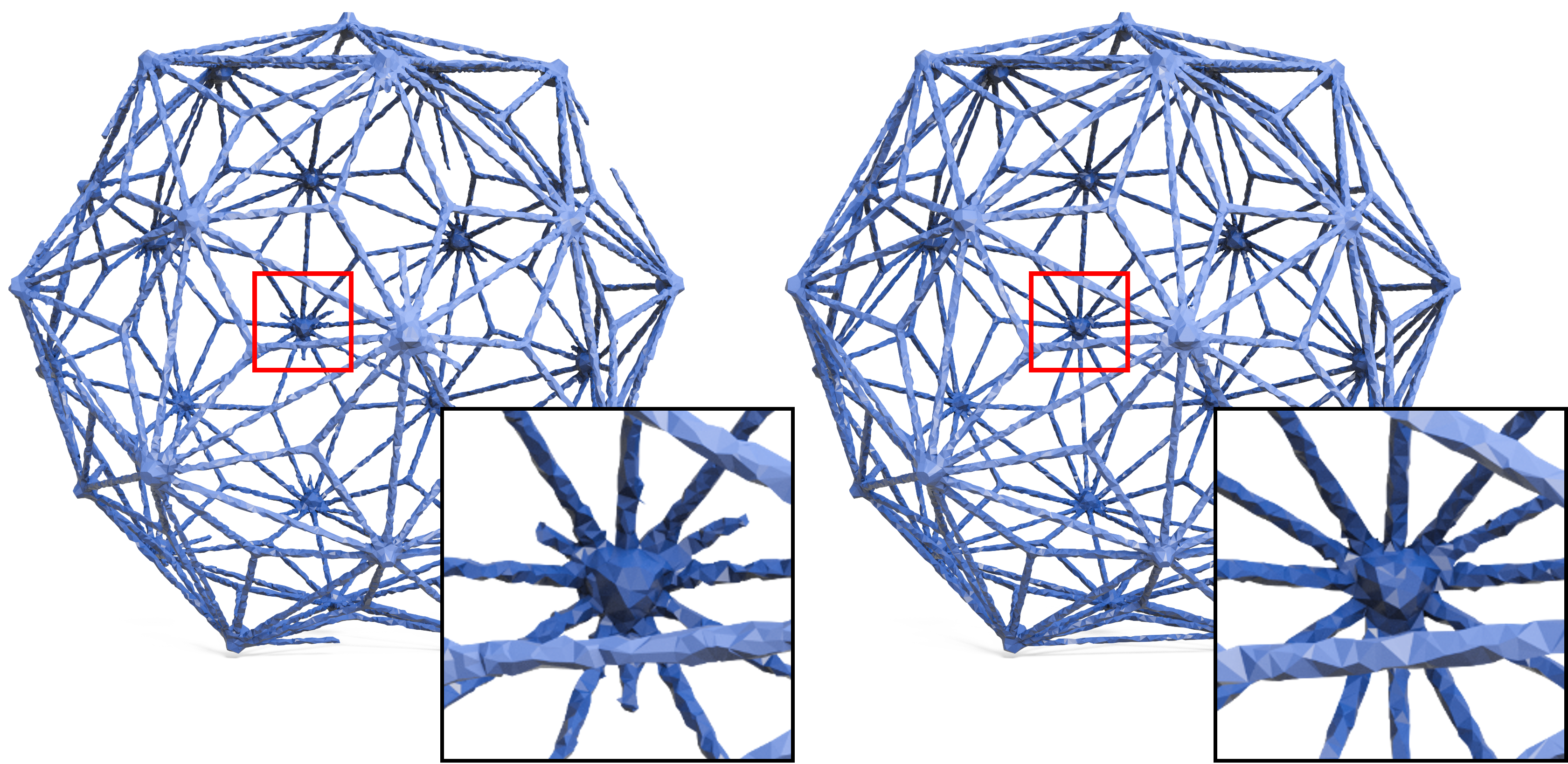}
\put(2,2){(a)}
\put(60,2){(b)}
\end{overpic}
\caption{Comparison of results using (a) sampling points only versus (b) our complete method with both sampling points and their surface projections. Starting from a uniform $20^3$ grid and refining to 40,000 points, the projection-enhanced approach demonstrates superior topological preservation.}
\label{fig:project}
\end{figure}

\textbf{Impact of SDF Projection Points.}
Incorporating projection points into our power diagram computation significantly improves the algorithm's capacity to accurately capture the underlying surface during isosurface extraction, particularly in regions with complex topological structures or fine geometric details. Figure~\ref{fig:project} illustrates the comparative results using a uniform $15^3$ grid as initial input, iteratively refined to 40,000 sampling points with: (a) our baseline method using only sampling points without projections, and (b) our complete method integrating both sampling points and their corresponding surface projections.

The visual comparison reveals that the projection-enhanced approach successfully preserves topological integrity across the entire structure, while the baseline method exhibits substantial topological defects and missing features. These results demonstrate that surface projections provide crucial geometric cues that guide the power diagram construction toward more accurate surface representation. This guidance is especially valuable for reconstructing challenging features such as thin structures, sharp edges, and regions with high curvature variation that might otherwise be inadequately captured when relying solely on discrete SDF sampling points.

\begin{figure}[h]
	\centering
\begin{overpic}
[width=1\linewidth]{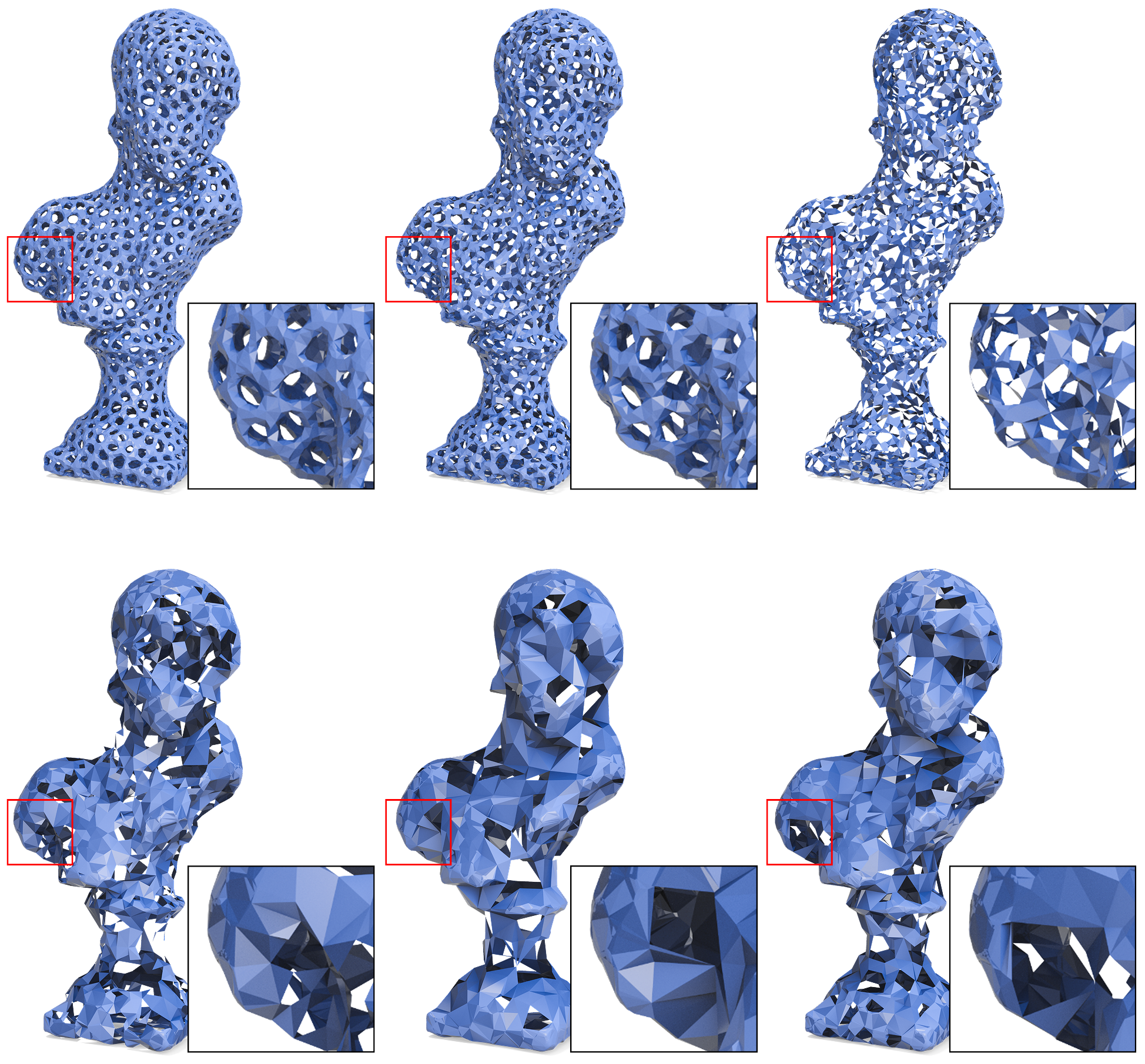}
\put(10,45){$5^3$}
\put(43,45){$33^3$}
\put(77,45){$34^3$}

\put(10,-4){$35^3$}
\put(43,-4){$40^3$}
\put(77,-4){$43^3$}
\end{overpic}
\vspace{2mm}
\caption{
Extraction results using different initial uniform sampling densities while maintaining 40,000 total sampling points. When more points are determined by our quality-driven strategy rather than uniform initialization, topological preservation significantly improves, confirming the effectiveness of our adaptive refinement approach.}
\label{fig:vary}
\end{figure}

\textbf{Effect of Adaptive Point Insertion Strategy on Topology Preservation.}
To quantify the effectiveness of our sequential point insertion strategy, we conducted a comparative analysis using varying initial uniform grid resolutions while maintaining a consistent total of 40,000 sampling points. As illustrated in Figure~\ref{fig:vary}, reconstruction quality correlates directly with the proportion of adaptively inserted samples: when a higher percentage of points are strategically added through our quality-driven criteria, the model demonstrates significantly superior topological fidelity compared to reconstructions relying predominantly on uniform sampling. The uniform sampling approach produces reconstructions with notable topological artifacts and missing features, particularly in geometrically complex regions. These results empirically validate that our adaptive point insertion mechanism intelligently prioritizes topologically critical regions, optimizing computational resources to preserve essential geometric features with minimal sampling budget.

\subsection{Neural SDF Extraction}
Neural SDFs have gained widespread applications in computer graphics and vision. Although neural SDFs cannot strictly guarantee unit-length gradients, making projection points inaccurate, our method demonstrates robust extraction capabilities when applied to these representations. We validated our approach on two classic neural SDF frameworks: SIREN-based SDFs fitted from mesh data and VolSDF reconstructed from multi-view images. As shown in Figure~\ref{fig:ndf}, even with only 10,000 sampling points, our method successfully extracts complex structures from SIREN SDFs, demonstrating that our algorithm maintains its powerful topological preservation capabilities when applied to neural implicit representations.

Efficient extraction from neural SDFs requires effective parallelization to address computational overhead. We implemented a batch-based processing strategy where multiple points are inserted in sequence, followed by parallel evaluation of all affected regions before proceeding to the next insertion batch. This approach significantly reduces neural network query overhead by leveraging GPU parallelization, making our method computationally practical for neural SDF extraction tasks.

The effectiveness of our extraction strategy on neural SDFs stems from two complementary factors. First, our adaptive sampling mechanism prioritizes point placement near the zero-level set—precisely where neural SDFs tend to maintain higher accuracy. Second, the approximation errors in neural SDFs often exhibit a self-balancing property, where local inaccuracies in different regions tend to compensate for each other, resulting in globally coherent reconstructions when processed through our power diagram framework.
\begin{figure}[h]
\centering
\begin{overpic}[width=0.9\linewidth]{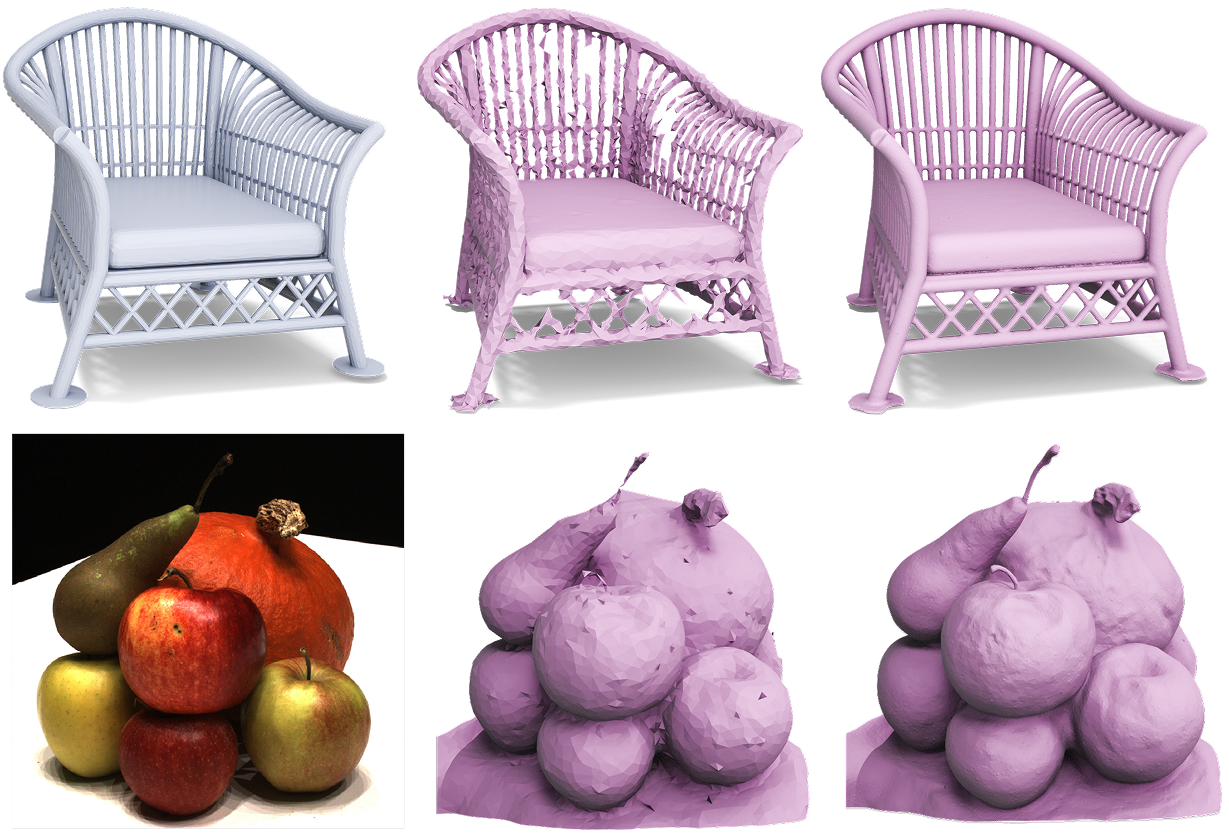}
    \put(9,68){\textbf{GT}}
    \put(32,68){\textbf{1w Sampling}}
    \put(65,68){\textbf{Dense Sampling}}
    \put(-5,44){\rotatebox{90}{\textbf{SIREN}}}
    \put(-5,10){\rotatebox{90}{\textbf{VolSDF}}}
\end{overpic}
\vspace{-1mm}
\caption{Neural SDF extraction in two representative scenarios: (1) SIREN-based SDFs fitted from mesh data and (2) VolSDF reconstructed from multi-view images.}
\vspace{-3mm}
\label{fig:ndf}
\vspace{-1mm}
\end{figure}

\section{Conclusion, Limitations and Future Work}

In this paper, we have presented a novel isosurface extraction method that advances capabilities in two key ways: first, by leveraging both SDF gradients and values with regular Delaunay tetrahedralization to utilize comprehensive information rather than merely sign changes; second, by introducing an iterative framework that couples extraction with adaptive refinement, guiding strategic point insertion based on quality assessment. This approach improves surface fidelity through localized updates without global recomputation. Experiments validate our method's effectiveness across diverse geometric complexities, showing superior performance for models with intricate details, thin structures, and complex topologies.

Our method has two primary limitations. First, it requires SDF gradient information during execution, which may be unavailable in certain contexts. Second, while our approach achieves dynamic isosurface extraction by maximizing information utilization, our surface quality assessment process necessitates multiple SDF queries without specific optimizations to reduce query count. This may incur significant computational overhead in certain scenarios.

For future work, we aim to address these limitations in two directions. We plan to develop techniques for estimating tangent points based on power diagrams, eliminating the explicit requirement for SDF gradient information. Additionally, we intend to refine our isosurface evaluation criteria to reduce the number of SDF queries required, improving computational efficiency while maintaining extraction quality.

{
    \bibliographystyle{unsrt}
    \bibliography{main}
}



\vfill

\end{document}